\documentclass[prl,twocolumn,superscriptaddress]{revtex4}

\usepackage[nice]{nicefrac}
\usepackage{array}
\usepackage{amsmath}
\usepackage{graphicx}
\usepackage{enumerate,amsthm,amssymb,color}
\usepackage{bbm}
\usepackage{bm}
\usepackage{synttree}
\usepackage{caption}
\usepackage{subfig}
\usepackage{multirow}

\newcommand{\Mod}[1]{\ (\mathrm{mod}\ #1)}

\newcommand{\ket}[1]{|#1\rangle}
\newcommand{\bra}[1]{\langle#1|}

\newcommand{\COMMENT}[1]{}

\bibliographystyle{apsrev}

\newtheorem*{conjecture*}{Conjecture}

\graphicspath{ {images/} }

\begin{document}

\title{Tsirelson's bound and Landauer's principle in a single-system game}

\author{Luciana Henaut}\email{luciana.henaut.14@ucl.ac.uk}\affiliation{Department of Physics and Astronomy, University College London, London, WC1E 6BT, UK}
\author{Lorenzo Catani}\email{lorenzo.catani.14@ucl.ac.uk}\affiliation{Department of Physics and Astronomy, University College London, London, WC1E 6BT, UK}\author{Dan E. Browne}\affiliation{Department of Physics and Astronomy, University College London, London, WC1E 6BT, UK}\author{Shane Mansfield}\affiliation{Sorbonne Universit\'e, CNRS, Laboratoire d'Informatique de Paris 6, F-75005 Paris, France}
\author{Anna Pappa}\affiliation{Department of Physics and Astronomy, University College London, London, WC1E 6BT, UK}

\date{\today}

\begin{abstract}
\noindent We introduce a simple single-system game inspired by the Clauser-Horne-Shimony-Holt (CHSH) game. For qubit systems subjected to unitary gates and projective measurements, we prove that any strategy in our game can be mapped to a strategy in the CHSH game, which implies that  Tsirelson's bound also holds in our setting.
More generally, we show that the optimal success probability depends on the reversible or irreversible character of the gates, the quantum or classical nature of the system and the system dimension.  We analyse the bounds obtained in light of Landauer's principle, showing the entropic costs of the erasure associated with the game. This demonstrates a connection between the reversibility in fundamental operations embodied by Landauer's principle and Tsirelson's bound, that arises from the restricted physics of a unitarily-evolving single-qubit system. 
\end{abstract}

\maketitle

Computational protocols in which quantum mechanical strategies provide an advantage over classical ones have long been an important focus of study. A way of recasting the Clauser-Horne-Shimony-Holt formulation  \cite{CHSH69} of Bell's celebrated theorem \cite{Bell1966} into a game for which quantum strategies can provide an advantage has been proposed in \cite{vanDam} and has since been referred to as the CHSH game. The players of the game, Alice and Bob, are separated and unable to communicate with each other; each is given one randomly uniform bit, labeled $a$ and $b$ respectively, and they win the game if they return single bits, $x$ and $y$ respectively, such that $x\oplus y =a\cdot b \pmod 2$. 

In game theory, the optimal success probability for a game is called its \textit{value}, which we denote by $\omega$. The value of the CHSH game, $\omega(\textrm{CHSH})$, depends upon the physics of the systems exploited by Alice and Bob. Famously, if Alice and Bob employ only classical strategies,  the value of the CHSH game is $\omega(\textrm{CHSH})=0.75$. On the other hand, if they have access to quantum resources, $\omega(\textrm{CHSH})=\cos^2(\frac{\pi}{8})\approx0.85$. The limitation on the value of the game for classical systems is called a Bell inequality, and the value $0.75$ is often called the Bell bound. The fact that the value of the game when using quantum resources violates the Bell bound, but is nevertheless limited substantially below 1, was first noted by Tsirelson \cite{Cirelson1980}, and the value $\cos^2(\frac{\pi}{8})$ is known as Tsirelson's bound. Popescu and Rohrlich \cite{Popescu1994} noted that in more general theories than quantum mechanics, perfect strategies for the CHSH game that achieve a value of 1 could exist via a correlation now known as a Popescu-Rohrlich (PR) box, without violating the no-signaling assumption between Alice and Bob during the game.

The CHSH game is of great importance because the dependence of its value from the underlying physical model gives us a tool to distinguish different types of theories experimentally, and allows us to test nature. It also reveals insights into a non-classical feature of quantum mechanics (known colloquially as ``non-locality''), which has proven to be a resource for quantum technologies, such as device independent cryptography \cite{Pironio10}. Generalisations to $\bmod{q}$ arithmetics have also been proposed  \cite{Buhrman2005,Ji2008,Liang2009,Bavarian2015}. Naturally, a key focus of these studies has been to find the classical (Bell bound) and quantum value (Tsirelson bound) for these \textit{$CHSH_q$ games}. However, success has been limited.
Upper bounds for the quantum value given by a precise mathematical expression have been provided in \cite{Bavarian2015} when $q$ is a prime or prime power, but these are not known to be tight. Moreover, numerical analysis on lower and upper bounds suggest different values \cite{Liang2009}.

Following \cite{Dunjko2016, Barz2016,Clementi2017}, we propose and investigate  a simple one-player variant of the CHSH game that uses a single system as resource. Due to its similarity with the CHSH game we call it the \textit{CHSH* game}. However, unlike the CHSH game that involves two space-like separated parties, the CHSH* game cannot involve any non-locality argument to explain the computational advantages. Similarly, it does not show any contextuality (at least in its usual formulations \cite{KochenSpecker67, Spekkens2005}), which in other computational settings is known to be necessary for non-linear computations \cite{Raussendorf13}.

We first show that, when the player applies unitary dynamics and projective measurements on a qubit system, the value of the CHSH* game is equal to Tsirelson's bound; this is proven via an explicit mapping between the strategies in the CHSH* and CHSH games. We then show that this setting is sensitive to a broad range of properties of the system used, specifically whether the system is quantum or classical, what is the set of operations allowed to the player (namely reversible versus irreversible and Clifford versus non-Clifford) and what is the dimension of the space in which the system resides. Following Landauer's assertion that only reversible operations are truly fundamental, we show that bit erasure is a powerful tool for increasing the winning probability, shedding light on the source of quantum advantage in this game. We finally conjecture that our results also apply to the CHSH$_q^*$ game for any dimension $q,$ by considering the case of $q=3.$

\begin{figure}[h!]
\includegraphics[width=.45\textwidth]{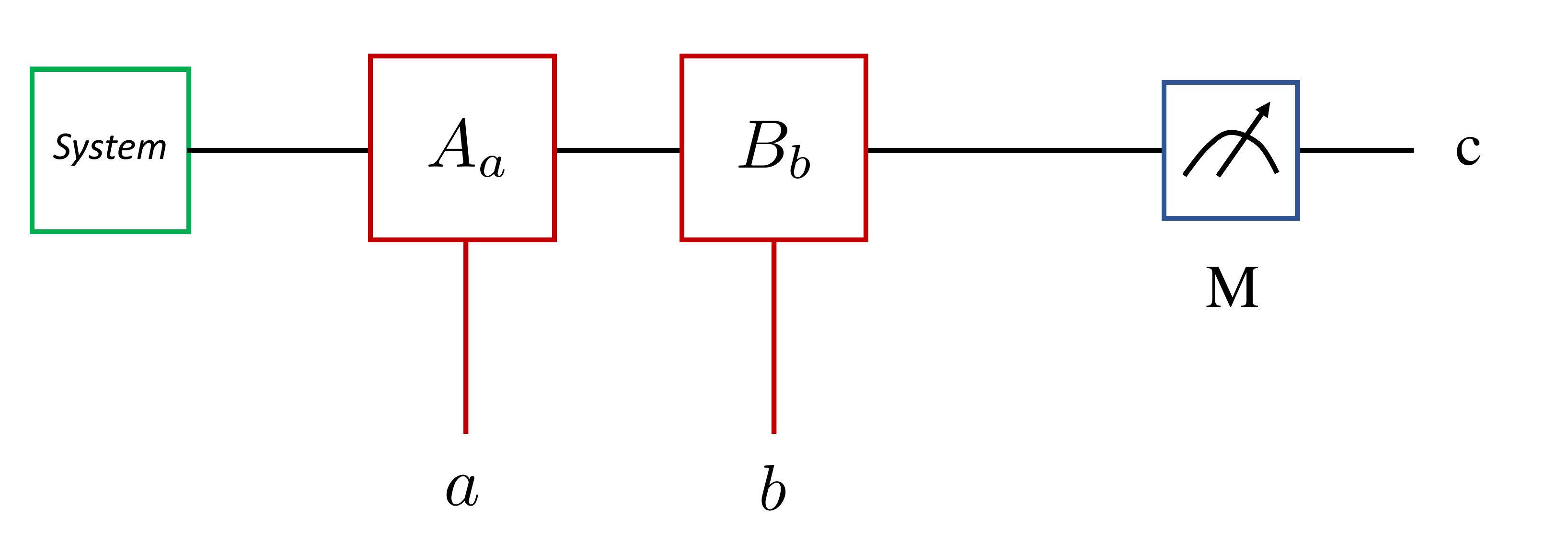}
\caption{\footnotesize{\textbf{Single-system protocol.} An initial system is subjected to controlled transformations,
with control bits $a$ and $b$, respectively, and then measured. The goal is to maximise the probability that the value of the output is the product of the values of the input bits. }}
\label{Protocol}
\end{figure}

\paragraph{The CHSH* game.} In this game (illustrated in Fig.~\ref{Protocol}), a single player has in her possession a single system of dimension $d$, that can be classical or quantum. She is given a specification of the state preparations, transformations and measurements that she is allowed to employ and in the course of the game, she is also provided with two uniformly random bits $a$ and $b$. Choosing from the allowed operations, the player must specify in advance an initial state, controlled operations $A_a$ and $B_b$  and a final two-outcome measurement $M$. Once the player receives $a$ and $b$, the corresponding operations are implemented in sequence and measurement $M$ is performed, returning outcome $c$. The player wins the game when $c=a\cdot b \pmod{2}$. We are interested in finding the value $\omega(\textrm{CHSH*})$ of this game, which corresponds to the average winning probability over all possible strategies: 
\begin{equation*}\label{success_pr}
\omega(\textrm{CHSH*}) = \max_{\textrm{all strategies}} \frac{1}{4}\sum_{a,b\in\mathbb{Z}_2} p{\text{($c=a\cdot b \mid a,b$)}}.
\end{equation*}


\begin{table*}[t!] 
\centering

\begin{tabular}{|l|l|l|l|l|c|}
\hline
Name of setting      & System Type & Initial states &  Transformations &  Measurements   & $\omega$(CHSH*)                \\ \hline\hline
Unitary              & Quantum     & Any                    & Any unitary gate        & Any two-outcome PVM & $\cos^2(\frac{\pi}{8})$ \\ \hline
Clifford             & Quantum     & Pauli eigenstates      & Clifford group gates    & Pauli measurements                     & 0.75 \\ \hline
Reversible Classical & Classical   & Any                    & Reversible gates        & n/a                                    & 0.75 \\ \hline
Irreversible         & Classical/Quantum     & Any                    & Any                     & Any                                    & 1\\ \hline
\end{tabular}
\caption{\footnotesize{The different settings of the CHSH* game for systems of dimension $d=2$.}}
\label{settings}
\end{table*}


\textit{Relationship with the CHSH game.} In this work, we will study the CHSH* game in a variety of \textit{settings} (see Table~\ref{settings}), where we make different assumptions about the physics of the system in which the game is cast. First, we consider the case where the player's system is a single qubit in the \textit{unitary setting}, meaning that all transformations applied during the game are unitary. We further assume that the final measurement is a projective two-outcome measurement.

\newtheorem{Prop}{Proposition}
\begin{Prop}\label{MainTheorem}
The value of the CHSH* game with a $d=2$ quantum system in the unitary setting is $\cos^2(\frac{\pi}{8})$.
\end{Prop}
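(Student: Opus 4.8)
The plan is to prove the bound by exhibiting value-preserving translations, in both directions, between strategies of the CHSH* game in the unitary qubit setting and strategies of the ordinary CHSH game; Tsirelson's bound $\cos^2(\frac{\pi}{8})$ then transfers verbatim. Throughout one may assume without loss of generality that the initial state is pure, $\ket{\psi}$ (the winning probability is affine in the initial density matrix, so an extreme point is optimal), and that the final two-outcome PVM is the non-degenerate computational-basis measurement $M=\{\ketbra{0}{0},\ketbra{1}{1}\}$: any other rank-$(1,1)$ PVM $\{\ketbra{m}{m},\ketbra{m^\perp}{m^\perp}\}$ differs from it by a fixed unitary that can be absorbed into the gate $B_b$, while the degenerate PVM only achieves value $3/4$.

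\textbf{Upper bound.} Given a strategy $(\ket{\psi},A_0,A_1,B_0,B_1,M)$, set $\sigma_a := A_a\ketbra{\psi}{\psi}A_a^{\dagger}$ for the state after the $a$-controlled gate, and $\hat{B}_b := B_b^{\dagger}\big(\ketbra{0}{0}-\ketbra{1}{1}\big)B_b$, the $\pm1$-valued observable obtained by pulling the measurement back through the $b$-controlled gate. A one-line computation gives $p(c\mid a,b)=\tr\!\big(\sigma_a\,\tfrac12(I+(-1)^{c}\hat{B}_b)\big)$, so the winning probability of the strategy equals
\begin{equation*}
\tfrac18\Big[\,4+\tr(\sigma_0\hat{B}_0)+\tr(\sigma_0\hat{B}_1)+\tr(\sigma_1\hat{B}_0)-\tr(\sigma_1\hat{B}_1)\,\Big],
\end{equation*}
which is exactly the standard CHSH correlation functional, with the $\sigma_a$ in the role of Alice's prepared states and the $\hat{B}_b$ in the role of Bob's dichotomic observables. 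This value is realised by a genuine CHSH strategy: let Alice and Bob share the two-qubit maximally entangled state $\ket{\Phi^+}=\tfrac{1}{\sqrt2}(\ket{00}+\ket{11})$, let Alice measure the observable $\hat{A}_a:=2\overline{\sigma_a}-I$ (complex conjugate in the computational basis --- a legitimate $\pm1$ observable) and Bob measure $\hat{B}_b$. The transpose trick $\bra{\Phi^+}P\otimes Q\ket{\Phi^+}=\tfrac12\tr(P^{T}Q)$, together with $\tr\hat{B}_b=0$, yields $\bra{\Phi^+}\hat{A}_a\otimes\hat{B}_b\ket{\Phi^+}=\tr(\sigma_a\hat{B}_b)$, so this CHSH strategy wins with exactly the same probability as the CHSH* strategy we started from. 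Since this holds for every CHSH* strategy, $\omega(\textrm{CHSH*})\le\omega(\textrm{CHSH})=\cos^2(\frac{\pi}{8})$.

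\textbf{Lower bound.} Running the same dictionary backwards on the optimal CHSH strategy: take $\hat{B}_0,\hat{B}_1$ to be the Pauli observables along $\hat{x}$ and $\hat{y}$, and choose the pure states $\sigma_a$ to have Bloch vectors proportional to $\hat{x}+\hat{y}$ and $\hat{x}-\hat{y}$. This is realised in the CHSH* game by $\ket{\psi}=\ket{0}$, any unitary $A_a$ mapping $\ket{0}$ to the corresponding state, any unitary $B_b$ with $B_b^{\dagger}(\ketbra{0}{0}-\ketbra{1}{1})B_b=\hat{B}_b$, and the computational-basis measurement; by the displayed identity it wins with probability $\tfrac18[4+2\sqrt2]=\tfrac{2+\sqrt2}{4}=\cos^2(\frac{\pi}{8})$. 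Combining the two bounds gives the claim.

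\textbf{Where the work is.} The algebra above is short; the substantive points are two. First, a conceptual one: although the CHSH* game is sequential and thus superficially lets the $a$-stage ``communicate'' a qubit to the $b$-stage, this confers no advantage over CHSH, and the reason becomes visible only after rewriting the winning probability in the CHSH-functional form --- the very same rewriting fails precisely when the gates may be irreversible, which is what opens the door to the erase-to-$\ket{0}$ strategy of value $1$ in the irreversible setting. Second, the technical care needed to get the complex-conjugate/transpose bookkeeping in the entangled-state construction exactly right, so that the translation preserves the winning probability rather than merely bounding it (an inequality in the wrong direction would yield only one half of the statement). The remaining items --- the formula for $p(c\mid a,b)$, the check that $2\overline{\sigma_a}-I$ is a valid observable, and the elementary inequality $\vec{r}_0\!\cdot\!(\vec{n}_0+\vec{n}_1)+\vec{r}_1\!\cdot\!(\vec{n}_0-\vec{n}_1)\le 2\sqrt2$ underlying Tsirelson's bound --- are routine.
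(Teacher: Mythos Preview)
Your argument is correct and complete. Both you and the paper establish the upper bound by exhibiting, for every unitary CHSH* strategy, a CHSH strategy with identical winning probability, and then invoke Tsirelson's theorem; the lower bound is in both cases an explicit optimal strategy. The difference lies in how the value-preserving translation is built. The paper uses a gate-teleportation picture: Alice applies $A_a^{T}$ to her half of a Bell pair and measures, which teleports $A_a$ (up to a Pauli byproduct $Z^{x}$) onto Bob's qubit; equality of success probabilities then reduces to the $2\times2$-unitary identity $|\langle+|U|+\rangle|^{2}=|\langle-|U|-\rangle|^{2}$ (and its $\ket{-}$ analogue), handled by a small case analysis in $x$. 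You instead rewrite the CHSH* winning probability directly as the CHSH correlation functional in the variables $\sigma_a$ and $\hat B_b$, and then realise that functional by having Alice measure the observable $2\overline{\sigma_a}-I$ on $\ket{\Phi^{+}}$; the transpose trick together with $\tr\hat B_b=0$ does the rest without case splitting. Your route is a little more algebraic and avoids the byproduct bookkeeping, but it hides the same dimension-$2$ obstruction in a different place: that $2\overline{\sigma_a}-I$ is a legitimate $\pm1$ observable needs $\sigma_a$ to be a rank-one projector on a \emph{qubit}, and $\tr\hat B_b=0$ also uses $d=2$. So, just as the paper remarks, neither argument extends straightforwardly to CHSH$_q^{*}$ for $q\geq3$.
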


\noindent{}This result follows directly from the following lemma.

\newtheorem{Lemma}{Lemma}
\begin{Lemma}\label{MainLemma}
For every strategy in the CHSH* game in the unitary setting with $d=2$, we can derive an equivalent strategy for the two-player CHSH game such that both strategies lead to the same average success probability.
\end{Lemma}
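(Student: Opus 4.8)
The plan is to translate an arbitrary CHSH* strategy into the Bloch-sphere picture and then recognise its success probability as a CHSH correlator evaluated on a maximally entangled pair, thereby exhibiting the required equivalent CHSH strategy.

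First I would put the strategy into a standard form. Since $p(c\mid a,b)=\tr(M_c\,B_bA_a\,\rho\,A_a^{\dagger}B_b^{\dagger})$ is affine in the initial density operator $\rho$, it suffices to treat pure initial states (a mixed one is a convex mixture of pure strategies, reproduced on the CHSH side by shared randomness), so I write the initial Bloch vector as a unit vector $\vec r$. A two-outcome projective measurement on a qubit is, apart from the degenerate cases where one projector is $0$ or $I$ (which give a constant output $c$ and are trivially matched by a constant CHSH strategy), of the form $\{\tfrac12(I\pm\vec m\cdot\vec\sigma)\}$ with $|\vec m|=1$; and each unitary acts on Bloch vectors as a rotation $R_{A_a},R_{B_b}\in SO(3)$. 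Identifying the ``$+$'' outcome with $c=0$, a direct computation gives
\begin{equation*}
p(c=0\mid a,b)=\tfrac12\bigl(1+\vec m\cdot R_{B_b}R_{A_a}\vec r\bigr),
\end{equation*}
so the CHSH* value of the strategy is
\begin{equation*}
\omega=\tfrac12+\tfrac18\sum_{a,b\in\mathbb{Z}_2}(-1)^{ab}\,\vec m\cdot R_{B_b}R_{A_a}\vec r .
\end{equation*}

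Next I would exhibit the matching CHSH strategy. Let Alice and Bob share $\ket{\Phi^+}=\tfrac1{\sqrt2}(\ket{00}+\ket{11})$. On input $a$ Alice measures the $\pm1$ observable with Bloch axis $\vec\alpha_a:=R_{A_a}\vec r$ and outputs the bit $x$ encoding her outcome; on input $b$ Bob measures the $\pm1$ observable with Bloch axis $\vec\beta_b:=D\,R_{B_b}^{T}\vec m$, where $D=\mathrm{diag}(1,-1,1)$, and outputs the bit $y$. Using the conjugation identity $\langle\Phi^+|(\vec u\cdot\vec\sigma)\otimes(\vec v\cdot\vec\sigma)|\Phi^+\rangle=\vec u\cdot D\vec v$, the correlator is
\begin{equation*}
\langle\hat A_a\hat B_b\rangle=\vec\alpha_a\cdot D\vec\beta_b=\vec\alpha_a\cdot R_{B_b}^{T}\vec m=\vec m\cdot R_{B_b}R_{A_a}\vec r ,
\end{equation*}
and since $\vec\alpha_a,\vec\beta_b$ are unit vectors this is a legitimate quantum CHSH strategy. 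Its winning probability $\tfrac12+\tfrac18\sum_{a,b}(-1)^{ab}\langle\hat A_a\hat B_b\rangle$ equals $\omega$, which proves the lemma; Proposition~\ref{MainTheorem} then follows from Tsirelson's bound $\cos^2(\tfrac\pi8)$ on the CHSH value.

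The substantive content is the observation that the sequential single-qubit process ``prepare $\vec r$, apply $R_{A_a}$, apply $R_{B_b}$, measure along $\vec m$'' factorises exactly into an Alice part carrying $(\vec r,R_{A_a})$ and a Bob part carrying $(R_{B_b},\vec m)$; once that is seen, the rest is bookkeeping. The points requiring care are fixing conjugation/transpose conventions so that the reflection $D$ coming from the complex conjugation of $\sigma_y$ is absorbed correctly into Bob's axis (equivalently one may use the singlet together with a global sign), checking that the sign pattern $(-1)^{ab}$ is identical in both games, and disposing of the degenerate rank-$0$/rank-$2$ measurements separately.
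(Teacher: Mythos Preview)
Your argument is correct, and it is a genuinely different route from the paper's. The paper fixes the initial state to $\ket{+}$ and the measurement to Pauli $X$ without loss of generality, then builds an explicit two-party circuit via teleportation: Alice and Bob share $\ket{\Phi^+}$, Alice applies $A_a^{T}$ and measures in $X$ to get $x$, Bob applies $B_b$ and measures in $X$ to get $y$. The equality $\Pr(c=a\cdot b)=\Pr(x\oplus y=a\cdot b)$ is then reduced, in the nontrivial case $x=1$, to the identities $|\bra{+}U\ket{+}|^2=|\bra{-}U\ket{-}|^2$ and $|\bra{-}U\ket{+}|^2=|\bra{+}U\ket{-}|^2$ for $2\times2$ unitaries $U=B_bA_a$.

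Your approach instead passes to the Bloch picture, factorises the CHSH* success probability as $\tfrac12+\tfrac18\sum_{a,b}(-1)^{ab}\,\vec m\cdot R_{B_b}R_{A_a}\vec r$, and then reads off Alice's and Bob's local observables directly from $(\vec r,R_{A_a})$ and $(R_{B_b},\vec m)$, absorbing the $\sigma_y$-conjugation reflection $D$ into Bob's axis. What this buys you is a cleaner bookkeeping of arbitrary initial states, arbitrary projective measurements, and the degenerate rank-$0$/rank-$2$ cases, all without invoking the WLOG reduction; it also makes transparent that the per-input probabilities (not just the average) coincide, since the correlators match term by term. What the paper's teleportation construction buys is an operational picture at the circuit level and an explicit identification of the $2\times2$-specific step (the diagonal/antidiagonal modulus identities) responsible for the failure of a naive generalisation to $d>2$. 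Both arguments are intrinsically qubit-specific---yours through the $SO(3)$ Bloch representation, the paper's through those $2\times2$ identities.
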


\begin{proof}

We prove this explicitly. We first consider the CHSH* game and assume without loss of generality that the initial state is $\ket{+}$ and the measurement is the Pauli $X$ observable. A strategy thus consists of optimally choosing the gates $A_0,A_1,B_0,B_1$.

In Fig.~\ref{CHSH}, we show how, given a strategy for the CHSH* game, we can construct a strategy for the CHSH game. The key ingredient is a teleportation protocol that uses entanglement shared via the CNOT gate to teleport the effect of gate $A_a$ from one site (Alice's) to another spatially separated site (Bob's). Since operations $A_a$ are unitary, it holds that
\[A_a^T\otimes \mathbb{I}\Big(\frac{\ket{00}+\ket{11}}{\sqrt{2}}\Big)= \mathbb{I}\otimes A_a\Big(\frac{\ket{00}+\ket{11}}{\sqrt{2}}\Big) \, .\]
The teleported state on Bob's side after Alice measures her qubit is $A_a Z^{x}\left|+\right\rangle,$ where $Z$ is the Pauli $Z.$ The bits $x$ and $y$ are Alice's and Bob's outputs respectively.
In order to prove the lemma, we will show that the success probabilities for obtaining $c=a\cdot b$ in the CHSH* game and $x \oplus y = a\cdot b$ in the CHSH game are equal, i.e.:
\begin{equation*}\label{SuccessProb}\sum_{a,b}\Pr{(c=a\cdot b |a,b)}=\sum_{a,b}\Pr{(x\oplus y = a\cdot b |a,b)}.\end{equation*}
We proceed by showing that the terms in the above sums are pairwise equal, i.e. for every $a,b\in\{0,1\}$,
\begin{equation*}\label{PairWise}\Pr{(c=a\cdot b \mid a,b)}=\Pr{(x\oplus y = a\cdot b \mid a,b)}.\end{equation*}
In the case that $x=0$ this holds trivially; and when $x=1$, this reduces to showing that
\begin{eqnarray}
|\bra{+}B_bA_a\ket{+}|^2 &= &|\bra{-}B_bA_a\ket{-}|^2 \nonumber\\
|\bra{-}B_bA_a\ket{+}|^2 &= &|\bra{+}B_bA_a\ket{-}|^2,\nonumber
\end{eqnarray} 
which is necessarily true for any $2\times 2$ unitary gates.
\end{proof}

To see that Lemma~\ref{MainLemma} implies Proposition \ref{MainTheorem} we recall that Tsirelson's bound upperbounds the CHSH game at probability $\cos^2(\frac{\pi}{8})\approx0.85.$ A strategy which achieves this success probability involves the following gates: $A_0=\openone,A_1=S, B_0=T^\dagger, B_1=T,$ where $S  = R_z (\frac{\pi}{2})$ and $T  = R_z (\frac{\pi}{4})$ correspond to rotations around the $z$-axis in the usual Bloch sphere representation of the qubit. These unitaries are the gates mapping between the observables typically used to attain the Tsirelson bound in the CHSH game when the parties share {a Bell pair}. This strategy is also strictly related to the optimal strategies used in other tasks involving one qubit, like quantum random access codes \cite{Galvao} and parity oblivious multiplexing \cite{Spekkens2009}.  Lemma~\ref{MainLemma} demonstrates a tight link between Tsirelson's bound for the CHSH game and the value of CHSH* game in the above setting.

\begin{figure*}[t!]
\centering
\subfloat[][\label{MappingSingle}]
{\includegraphics[width=.47\textwidth,height=.21\textheight]{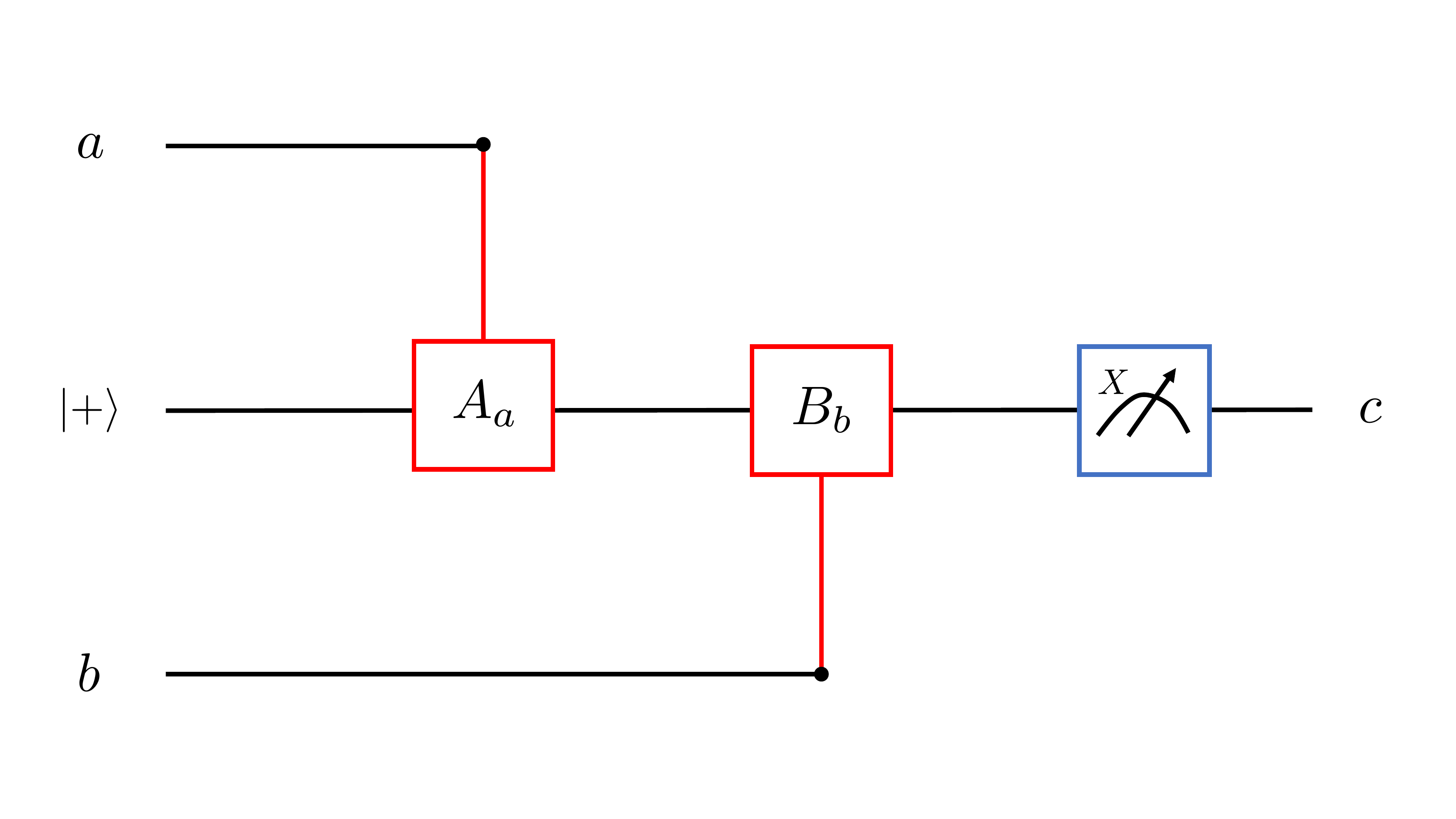}}\hfill
\subfloat[][\label{MappingCHSH}]
{\includegraphics[width=.49\textwidth,height=.21\textheight]{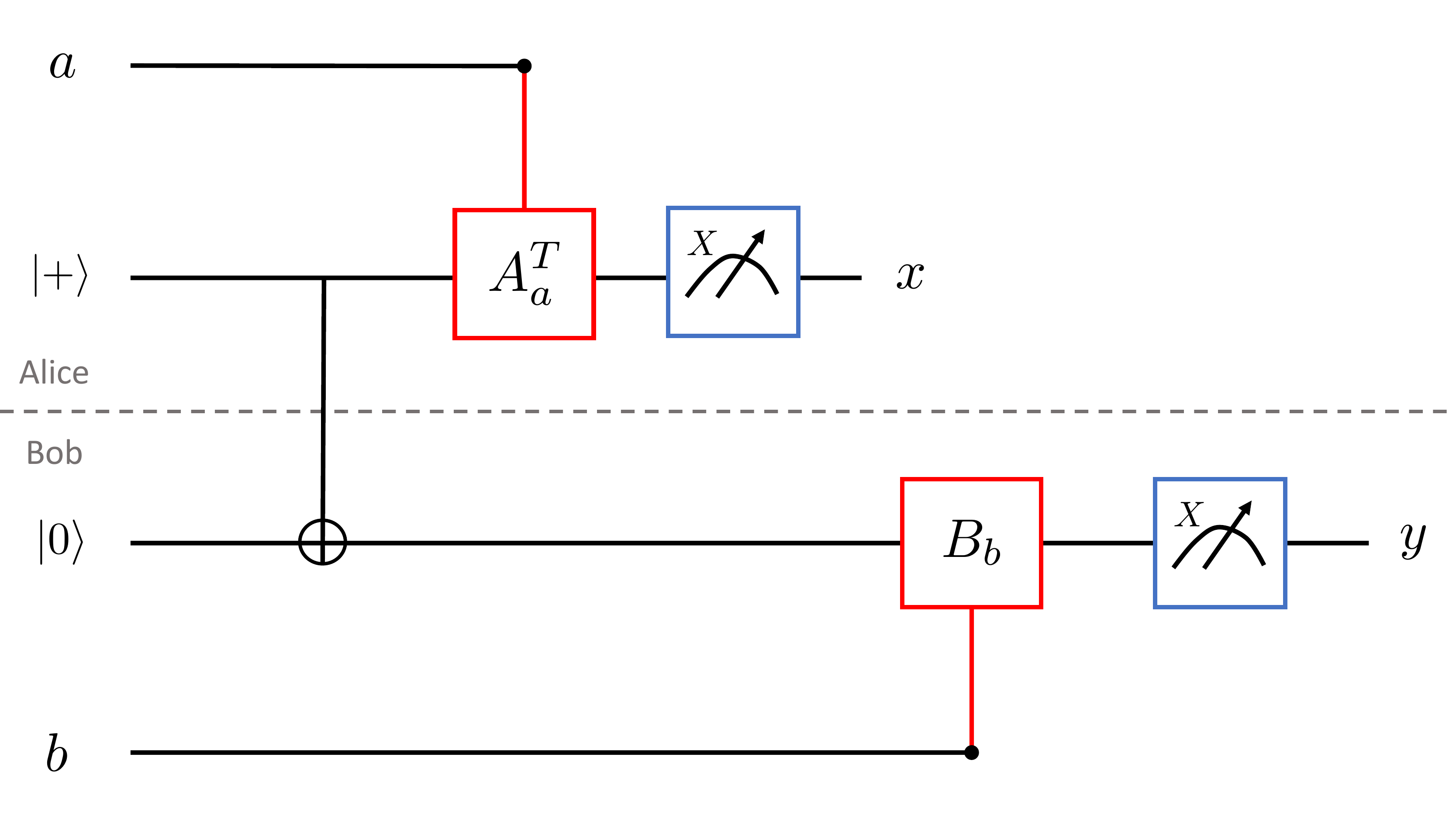}}
\caption{\footnotesize{\textbf{Mapping of the CHSH* game to the CHSH game.} Fig.~\ref{MappingSingle} shows the single qubit scheme, with the initial qubit in state $\left|+\right\rangle,$ controlled gates $A_a,B_b,$ measurement on the $X$ basis and output $c.$ Figure \ref{MappingCHSH} shows the corresponding CHSH game, where Alice and Bob share a Bell pair, and apply gates $A^T_a,B_b$ to their systems to obtain measurement results $x$ and $y$ respectively.}}
\label{CHSH}
\end{figure*}

\textit{Further settings.}
The  proof of Lemma~\ref{MainLemma} relies on the fact that the transformations are unitary, and that the system in the CHSH* game has dimension 2. We will now study the game in other settings, and see that its value is strongly setting-dependent.

First, we relax the restriction that transformations must be unitary by considering the \textit{irreversible setting}. We now allow irreversible transformations, such as the $\mathsf{ERASE}$ map, which maps any qubit state to the state $\ket{0}.$ This may be achieved via a $Z$ measurement and conditional $X$ correction. Introducing irreversible transformations has a dramatic effect on the value of the CHSH* game.

\begin{Prop}\label{IrrevTheorem}
The value of the CHSH* game with a $d=2$ classical or quantum system in the irreversible setting is 1.
\end{Prop}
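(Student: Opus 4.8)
The plan is to exhibit an explicit deterministic strategy that wins with probability $1$, and then note that $1$ is trivially an upper bound, so the value is exactly $1$. The key observation is that in the irreversible setting the player is allowed the $\mathsf{ERASE}$ map, which (as noted in the text) can be realised by a $Z$-measurement followed by a conditional $X$, and more generally the player has access to \emph{any} transformation, reversible or not. This means the player can effectively read the value of the control bit $a$ into the state of the system, compute, and then overwrite — something forbidden in the unitary setting because a unitary $A_a$ cannot act as a "measure-and-reset" on its control.

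First I would set up the strategy concretely. Take the initial state to be $\ket{0}$. For the controlled operation $A_a$: if $a=0$, apply $\mathsf{ERASE}$ (leaving the state $\ket{0}$); if $a=1$, apply the map that sends every state to $\ket{1}$ (i.e.\ $\mathsf{ERASE}$ followed by $X$). After this step the system is in state $\ket{a}$ regardless of the initial state. For the controlled operation $B_b$: if $b=0$, apply $\mathsf{ERASE}$, sending the state to $\ket{0}$; if $b=1$, apply the identity, leaving the state as $\ket{a}$. After both operations the system is in state $\ket{a \cdot b}$. Finally, let $M$ be the computational-basis ($Z$) measurement, with outcome $c$ the measured bit. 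Then $c = a\cdot b$ deterministically, so $\Pr(c = a\cdot b \mid a,b) = 1$ for every $a,b$, giving average success probability $1$.

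The second half is immediate: since $\omega(\textrm{CHSH*})$ is an average of probabilities, it is at most $1$, and the strategy above attains $1$, so $\omega(\textrm{CHSH*}) = 1$ in the irreversible setting. The same strategy is manifestly realisable with a classical bit (the "states" $\ket{0},\ket{1}$ and the maps used are all classical), so the statement holds for both the classical and quantum $d=2$ system, as claimed.

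There is essentially no hard step here — the content is conceptual rather than technical. The only thing that deserves a sentence of care is making sure the strategy respects the game's structure: the operations $A_a$ and $B_b$ must be fixed \emph{families} chosen in advance (one operation per value of the control bit), and the measurement $M$ must be fixed in advance too; the strategy above does satisfy this, since $\{A_0 = \mathsf{ERASE},\ A_1 = X\circ\mathsf{ERASE}\}$, $\{B_0 = \mathsf{ERASE},\ B_1 = \mathbb{I}\}$ and $M = $ the $Z$ measurement are all specified before $a,b$ are revealed. The "obstacle", if any, is purely expository: contrasting with Lemma~\ref{MainLemma} to make clear \emph{why} irreversibility breaks the reduction to CHSH — namely that $\mathsf{ERASE}$ lets information about $a$ be dumped and $b$ be used to conditionally restore it, implementing the AND function directly, whereas unitarity forbids this.
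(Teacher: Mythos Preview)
Your proof is correct and follows essentially the same approach as the paper: exhibit an explicit strategy using $\mathsf{ERASE}$ that computes $a\cdot b$ deterministically, with $Z$-basis measurement. The paper's version is marginally simpler—taking $A_0=\mathbb{I}$ and $A_1=X$ (your extra $\mathsf{ERASE}$ in $A_a$ is harmless but redundant since the initial state is already $\ket{0}$)—but the key idea, using $B_0=\mathsf{ERASE}$ to wipe the $a$-dependence when $b=0$, is identical.
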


\begin{proof}
Proof is via explicit example. 	Let the initial state be $\ket{0}$ and let $A_0=\mathbb{I},~ A_1=X,~B_0=\mathsf{ERASE}, B_1=\mathbb{I}$. The final measurement is in the $Z$ basis. Considering the different input cases, we see that the output $c$ will always be $0$ unless both $a$ and $b$ are 1. Thus this strategy always wins the game. Every element of the strategy presented in this proof can be achieved in a classical system, hence we can conclude that this maximum value of 1 can be achieved even with no quantum dynamics at all.
\end{proof}

This increase in the value of the game depends crucially on the \textit{irreversibility} of the $\mathsf{ERASE}$ map. As we see directly, if we restrict logic operations to be reversible, we find that the value of the game is reduced.

\begin{Prop}\label{rev}
The value of the CHSH* game with a $d=2$ classical system in the reversible setting is 0.75.
\end{Prop}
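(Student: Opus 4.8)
\textit{Proof proposal.} The plan is to exploit the extremely limited structure of reversible classical dynamics on a single bit. Since the system has dimension $d=2$, its state space is $\mathbb{Z}_2$ and the only reversible (i.e.\ bijective) operations on it are the identity $\mathbb{I}$ and the bit-flip $\mathsf{NOT}$; in particular no ancillary space is available to enrich the dynamics. Hence each controlled gate $A_a$ is characterised by a single bit $\alpha_a\in\mathbb{Z}_2$ recording whether it flips the system, and likewise $B_b$ by $\beta_b\in\mathbb{Z}_2$. As there is no measurement step in this setting, the output is simply the final value of the bit, and for a deterministic strategy with initial value $s_0\in\mathbb{Z}_2$ this is $c = s_0 \oplus \alpha_a \oplus \beta_b$.

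Next I would translate the winning condition into a linear-algebraic statement over $\mathbb{Z}_2$: winning on input $(a,b)$ means $s_0 \oplus \alpha_a \oplus \beta_b = a\cdot b$. Writing out the four equations obtained as $(a,b)$ ranges over $\mathbb{Z}_2^2$ and taking their XOR, each of the five variables $s_0,\alpha_0,\alpha_1,\beta_0,\beta_1$ occurs an even number of times on the left, so the left-hand sides sum to $0$, whereas the right-hand sides sum to $0\oplus 0\oplus 0\oplus 1 = 1$. The four winning conditions are therefore jointly unsatisfiable, so any deterministic strategy wins on at most three of the four inputs, i.e.\ with probability at most $3/4$. This bound passes to arbitrary (possibly randomised, possibly with correlation between the choice of initial state and of the gates) strategies by the standard convexity argument: such a strategy is a mixture of deterministic ones, and its average success probability cannot exceed the deterministic optimum. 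Finally I would exhibit a strategy attaining the bound — e.g.\ initial state $\ket{0}$ and $A_0=A_1=B_0=B_1=\mathbb{I}$, which always outputs $c=0$ and hence wins on every input except $(a,b)=(1,1)$ — so the value is exactly $0.75$.

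I do not expect a real obstacle here; the only point that needs a careful word is the first step, namely justifying that ``reversible'' together with $d=2$ genuinely pins the single-bit dynamics down to $\{\mathbb{I},\mathsf{NOT}\}$ and thus forces the affine form $c = s_0\oplus\alpha_a\oplus\beta_b$. Once that is in place, the argument is precisely the classical CHSH/Bell-bound computation.
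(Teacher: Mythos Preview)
Your proposal is correct and follows essentially the same approach as the paper: both observe that reversible single-bit dynamics force the output to be affine, $c=s_0\oplus\alpha_a\oplus\beta_b$, and then use that $a\cdot b$ is not an affine function over $\mathbb{Z}_2$ (the paper phrases this as ``the closest linear function to $a\cdot b$ is the constant $0$'', while you give the explicit XOR-parity argument). Your write-up is simply more detailed, including the convexity step for randomised strategies, which the paper leaves implicit.
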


\begin{proof}
To show that the value is at least 0.75, it suffices to describe a protocol which attains this success probability. This is given by the trivial protocol where the input bit is set to 0 and gates $A_a$ and $B_b$ are the identity, and thus the output is always 0. To see why this cannot be exceeded, we observe that all reversible one-bit functions are linear functions. The closest linear function to $a\cdot b$ is the constant function $f(a,b)=0$.
\end{proof}

To summarise the results so far, we have studied the  CHSH* game with a variety of restrictions on the system, which we called settings. We have found values of the game of 0.75, $\cos^2(\frac{\pi}{8})$ and 1, depending on the setting. These precisely match the Bell bound, Tsirelson bound and PR-box value of the CHSH game.

We now show that the CHSH* game is sensitive to further restrictions. Motivated by the crucial role that Clifford/non-Clifford group operations play in quantum computation \cite{Gottesman99}, we now study systems where the operations are restricted to Clifford unitaries. We will address these systems as being in a \textit{Clifford setting}, i.e. the initial system is a pure stabilizer state, all transformations are unitary Cliffords and the measurement is a Pauli observable. Recall that stabilizer states are eigenstates of Pauli operators and that the Clifford gates are gates that map stabilizer states to stabilizer states.

 \begin{Prop}\label{Clifford}
The value of the CHSH* game with a $d=2$ quantum system in the Clifford setting is 0.75.
\end{Prop}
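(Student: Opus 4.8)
The plan is to establish the matching bounds $0.75 \le \omega(\mathrm{CHSH*}) \le 0.75$. The lower bound comes for free from the trivial strategy already used in Propositions~\ref{IrrevTheorem} and~\ref{rev}: prepare $\ket 0$, set every controlled gate to $\mathbb{I}$, and measure in the $Z$ basis, so that $c=0$ always and the player wins whenever $a\cdot b=0$, i.e. with probability $3/4$. All of these ingredients --- a Pauli eigenstate, Clifford gates, a Pauli measurement --- are admissible in the Clifford setting.

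For the upper bound I would first reduce an arbitrary Clifford strategy to a canonical form. Since the single-qubit Clifford group acts transitively on the six single-qubit stabilizer states, one may absorb a fixed Clifford into $A_0,A_1$ so that the initial state is $\ket 0$, and a fixed Clifford into $B_0,B_1$ so that the final Pauli measurement is the $Z$ observable, with outcomes $c=0,1$ associated to $\ket 0,\ket 1$ respectively. In this form the pre-measurement state on input $(a,b)$ is $B_bA_a\ket 0$, again a single-qubit stabilizer state, so each $\Pr(c=a\cdot b\mid a,b)$ lies in $\{0,\tfrac12,1\}$. Writing $\ket{\alpha_a}:=A_a\ket 0$ and $\ket{\beta_b}:=B_b^\dagger\ket 0$ --- which range independently over the six stabilizer states as $A_0,A_1,B_0,B_1$ vary --- one gets $\Pr(c=0\mid a,b)=|\langle\beta_b|\alpha_a\rangle|^2=:p_{ab}$, so the quantity to be maximised is $p_{00}+p_{10}+p_{01}+(1-p_{11})$, and it remains to show this never exceeds $3$.

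The relevant facts are that for single-qubit stabilizer states $|\langle\psi|\phi\rangle|^2\in\{0,\tfrac12,1\}$, equal to $1$ precisely when $\ket\psi=\ket\phi$ up to phase and to $0$ precisely when they are Bloch-antipodal. If at most one of $p_{00},p_{10},p_{01}$ equals $1$, then $p_{00}+p_{10}+p_{01}\le 1+\tfrac12+\tfrac12=2$ and the total is at most $3$. If at least two of them equal $1$, then invertibility of the Cliffords forces two of $\ket{\alpha_0},\ket{\alpha_1},\ket{\beta_0},\ket{\beta_1}$ to coincide: e.g. $p_{10}=p_{01}=1$ gives $\ket{\beta_1}=\ket{\alpha_0}$ and $\ket{\beta_0}=\ket{\alpha_1}$, hence $p_{00}=|\langle\alpha_1|\alpha_0\rangle|^2=p_{11}$, and the total collapses to exactly $3$; the other two sub-cases are handled identically. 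Hence $\omega(\mathrm{CHSH*})\le 3/4$, and combining with the lower bound gives $0.75$. An alternative route is via Lemma~\ref{MainLemma}: a Clifford CHSH* strategy maps to a CHSH strategy in which the parties share a Bell pair and apply local Cliffords and local Pauli measurements, a stabilizer protocol, which cannot beat the Bell bound $0.75$. I would nonetheless prefer the direct argument, as it is self-contained and mirrors the proofs of the neighbouring propositions; its only real content is the canonical-form reduction together with the small finite check above, neither of which presents a genuine difficulty.
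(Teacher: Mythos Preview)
Your proof is correct but follows a genuinely different route from the paper's. The paper gives a two-line discreteness argument: since Cliffords send stabilizer states to stabilizer states, each $\Pr(c=a\cdot b\mid a,b)$ lies in $\{0,\tfrac12,1\}$, so the average success probability must lie in $\{0,\tfrac18,\dots,\tfrac78,1\}$; but Proposition~\ref{MainTheorem} already caps the unitary value at $\cos^2(\pi/8)\approx 0.85<\tfrac78$, which forces the Clifford value down to at most $\tfrac34$. Your direct case analysis on the overlaps $p_{ab}$ is longer but entirely self-contained---it never invokes Tsirelson's bound or the CHSH correspondence---so it would stand even if Proposition~\ref{MainTheorem} were unavailable or the propositions were reordered. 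The paper's route, in exchange, is slicker and makes the structural point that the Clifford restriction contributes nothing beyond discreteness of the outcome probabilities once the unitary bound is in hand. Your closing alternative via Lemma~\ref{MainLemma} plus classical simulability of stabilizer CHSH strategies is closer in spirit to what the paper does, though the paper appeals to Proposition~\ref{MainTheorem} directly rather than to a Gottesman--Knill-type observation on the mapped CHSH strategy.
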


\begin{proof}
Since both transformations are Cliffords, the state $B_bA_a\ket{+}$ is an eigenstate of a Pauli operator, and therefore when measured with the Pauli $X$ operator, will always yield one of the possible outcomes with probability $0$, $0.5$ or $1$. From definition, the average probability of success of the game $\omega(\textrm{CHSH*})$ can therefore obtain one of eight possible values in $\{0,\frac{1}{8},\dots, \frac{7}{8},1\}$.
As we have shown in Proposition \ref{MainTheorem}, when we consider all unitary operations, $\omega(\textrm{CHSH*})=\cos^2{(\frac{\pi}{8})}\approx 0.85$ , which is strictly smaller than $\frac{7}{8}$. In this (restricted to only Cliffords) setting, we can conclude that the maximum attainable value of $\omega(\textrm{CHSH*})$ is $\frac{3}{4}=0.75$.
\end{proof}

We see that restricting the CHSH* game to the Clifford setting gives a success probability equal to the reversible classical setting. This, again, resembles the CHSH game, where if states,  operations and measurements are similarly limited, the Bell inequality value of 0.75 cannot be surpassed. We now show that when diagonal non-Clifford gates are available, one can always do better than this bound. 

\begin{Prop}\label{Cliffordplus}
For a quantum system with $d=2$, in the Clifford setting but with the addition of any pair of non-Clifford gates $R_z(\varepsilon)$ and $R_z(\varepsilon)^\dagger$, with $\varepsilon\in(0,\frac{\pi}{2})$, the value of the CHSH* game is greater than $0.75$.
\end{Prop}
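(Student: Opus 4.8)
The plan is to prove the bound constructively, by exhibiting a single strategy built from Clifford gates together with the two permitted non-Clifford rotations whose average success probability exceeds $0.75$ for every $\varepsilon\in(0,\tfrac{\pi}{2})$. Take the initial state to be $\ket{+}$ and the final measurement to be the Pauli $X$ observable (both stabilizer resources), and set $A_0=\openone$, $A_1=S=R_z(\tfrac{\pi}{2})$, $B_0=R_z(\varepsilon)^\dagger$, $B_1=R_z(\varepsilon)$. This is precisely the optimal unitary strategy behind Proposition~\ref{MainTheorem} with the non-Clifford gate $T$ replaced by $R_z(\varepsilon)$, so one expects the value to degrade smoothly from Tsirelson's bound as $\varepsilon$ moves away from $\tfrac{\pi}{4}$ while staying strictly above the Clifford value $\tfrac34$.

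Next I would carry out the short calculation. Since $\ket{+}$ lies on the equator of the Bloch sphere and each composite gate $B_bA_a$ is a $z$-rotation, every pre-measurement state $B_bA_a\ket{+}$ stays on the equator, and the only facts needed are $|\braket{+}{R_z(\theta)+}|^2=\cos^2(\tfrac{\theta}{2})$ and $|\braket{-}{R_z(\theta)+}|^2=\sin^2(\tfrac{\theta}{2})$. For the three inputs with $a\cdot b=0$ the player wins on outcome $c=0$, which gives $\Pr(\mathrm{win}\mid 0,0)=\Pr(\mathrm{win}\mid 0,1)=\cos^2(\tfrac{\varepsilon}{2})$ and $\Pr(\mathrm{win}\mid 1,0)=\cos^2(\tfrac{\pi}{4}-\tfrac{\varepsilon}{2})$; for $(a,b)=(1,1)$ she wins on $c=1$, which gives $\Pr(\mathrm{win}\mid 1,1)=\sin^2(\tfrac{\pi}{4}+\tfrac{\varepsilon}{2})=\cos^2(\tfrac{\pi}{4}-\tfrac{\varepsilon}{2})$. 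Averaging the four terms and applying $\cos^2 x=\tfrac12(1+\cos 2x)$ collapses the expression to $\omega=\tfrac14\bigl(2+\cos\varepsilon+\sin\varepsilon\bigr)$.

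Finally I would conclude by writing $\cos\varepsilon+\sin\varepsilon=\sqrt2\,\sin(\varepsilon+\tfrac{\pi}{4})$ and noting that for $\varepsilon\in(0,\tfrac{\pi}{2})$ the argument $\varepsilon+\tfrac{\pi}{4}$ lies strictly inside $(\tfrac{\pi}{4},\tfrac{3\pi}{4})$, where $\sin$ exceeds its endpoint value $\tfrac{1}{\sqrt2}$; hence $\cos\varepsilon+\sin\varepsilon>1$ and $\omega>\tfrac14(2+1)=\tfrac34$, with equality only approached at the excluded endpoints and the maximum $\cos^2\tfrac{\pi}{8}$ recovered at $\varepsilon=\tfrac{\pi}{4}$. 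There is no genuine obstacle here, since the proposition is an existence statement and the computation is elementary; the only step needing a little thought is the \emph{choice} of strategy — one must pair the small rotation $R_z(\varepsilon)$ with the Clifford gate $S$ so that the four conditional probabilities reinforce rather than partially cancel. A lazier choice (for instance taking $A_0,A_1,B_0$ all trivial) still beats $\tfrac34$, but with a smaller margin and a less transparent closed form, so I would present the symmetric strategy above.
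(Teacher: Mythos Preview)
Your proof is correct and uses essentially the same approach as the paper: the identical strategy (replace $T$ by $R_z(\varepsilon)$ and $T^\dagger$ by $R_z(\varepsilon)^\dagger$ in the optimal unitary protocol), the same success probability $\tfrac14(2+\cos\varepsilon+\sin\varepsilon)$ (which is the paper's displayed expression after simplification), and you add the short justification that this exceeds $\tfrac34$ on $(0,\tfrac{\pi}{2})$, which the paper merely asserts. One small caveat: your closing parenthetical about the ``lazier choice'' with $A_0,A_1,B_0$ all trivial is not right---that strategy yields exactly $\tfrac34$, not strictly more---but this aside plays no role in your main argument.
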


\begin{proof}
The proof is via explicit construction. We adopt a strategy similar to the optimal quantum strategy in the unitary setting, where replacing $T$ with $R_z(\varepsilon)$ and $T^{\dagger}$ with $R^{\dagger}_z(\varepsilon),$ achieves a probability of success $P_{\textrm{suc}}$ greater than $0.75$:
\[\begin{split}P_{\textrm{suc}} & =\frac{1}{4}\biggl[\left(\frac{1}{2}+\frac{\cos(\varepsilon)}{2}\right)+\left(\frac{1}{2}+\frac{\cos(-\varepsilon)}{2}\right) \\ & +\left(\frac{1}{2}+\frac{\cos(\frac{\pi}{2}-\varepsilon)}{2}\right)+\left(1-\frac{1}{2}-\frac{\cos(\frac{\pi}{2}+\varepsilon)}{2}\right)\biggr].\end{split}\]
This probability is always greater than $0.75$ when  $\varepsilon\in(0,\frac{\pi}{2}),$ and attains a maximum of $\cos^2(\frac{\pi}{8})$ when $\varepsilon=\frac{\pi}{4}$ as expected.
\end{proof}

{\noindent{}Figure \ref{QubitStrategies} provides a geometrical comparison of optimal strategies in the three reversible settings we have considered. }

\begin{figure}[t!] 
\includegraphics[width=.49\textwidth]{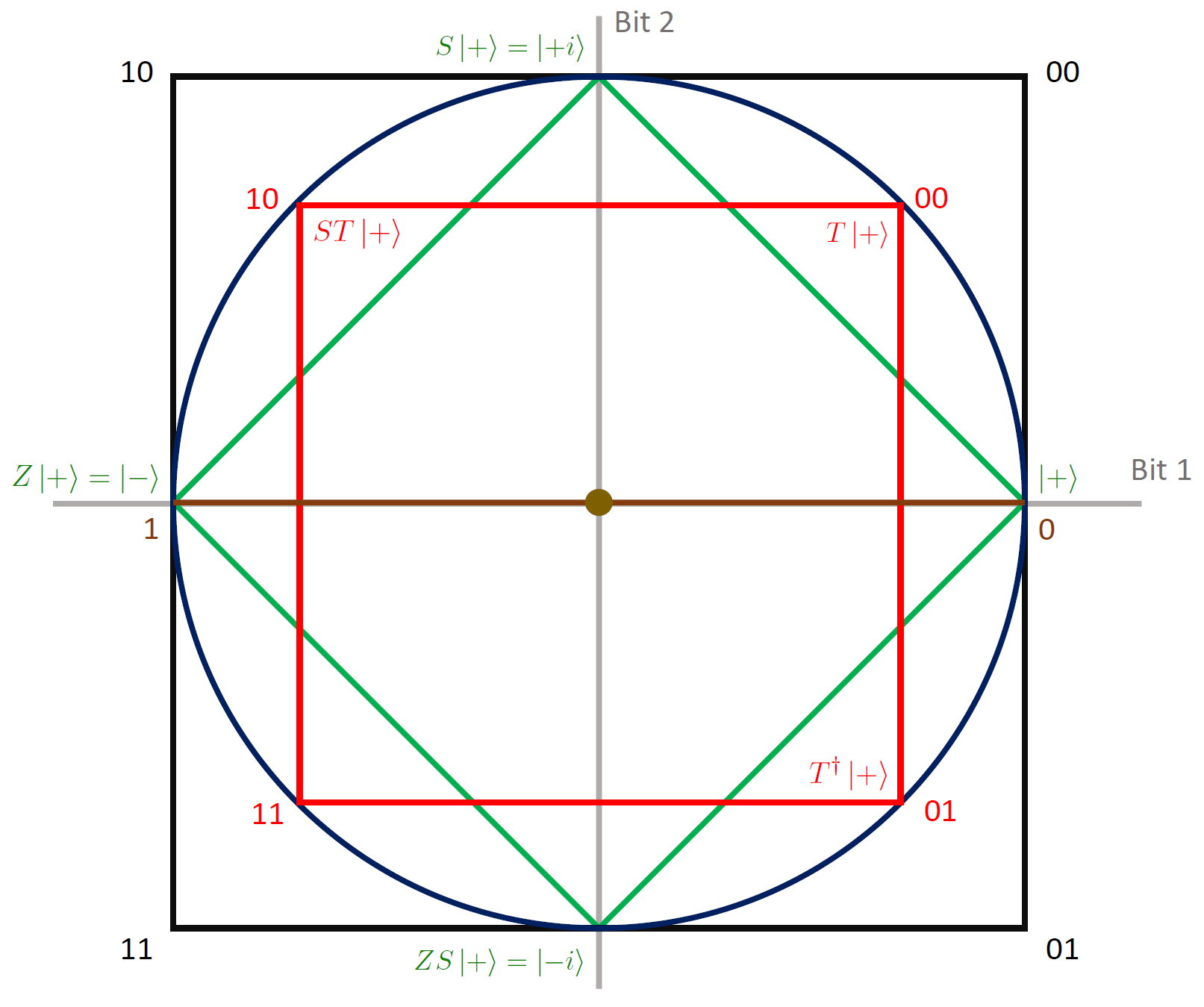}
\caption{\footnotesize{
\textbf{Geometrical analysis of the protocol} 
The figure shows the state space of two bits (vertices of the big black square), one qubit ($XY$ plane of the Bloch sphere) both in the optimal winning strategy (the vertices of the red square) and restricted to Clifford computation (the vertices of the tilted green square), and one bit, (\emph{e.g.} the edges of the brown line). Notice that the measurement at the end of the protocol corresponds to the collapse of a state to the $x$ axis.
}}

\label{QubitStrategies}
\end{figure}

Having seen that the value of the CHSH* game allows us to distinguish between various settings with systems of dimension $2$, we will now consider systems of higher dimension, beginning with dimension $3$.

\begin{Prop}\label{Qutrit}
For d-dimensional quantum or classical systems, in the reversible setting with $d\ge 3,$ there always exists a perfect strategy (i.e. the value of the game is 1).
\end{Prop}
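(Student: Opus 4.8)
The plan is to exhibit an explicit perfect strategy, exploiting the extra dimension ($d\ge 3$) to carry a ``marker'' state that records whether the AND has fired. This is precisely what is impossible at $d=2$: there every reversible one-bit gate is affine, so the composition $B_bA_a$ can never produce the $a\cdot b$ cross term (the obstruction already recorded in Proposition~\ref{rev}), whereas a third label breaks this.

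First I would fix the initial state to be $\ket{0}$ and set $A_0=B_0=\mathbb{I}$. For the controlled gates on input $1$, let $A_1$ be any permutation of the $d$ basis labels with $A_1\ket{0}=\ket{1}$ (e.g.\ the transposition of $\ket{0}$ and $\ket{1}$), and let $B_1$ be any permutation with $B_1\ket{0}=\ket{0}$ and $B_1\ket{1}=\ket{2}$ --- available exactly because $d\ge 3$ provides a third label to which $\ket{1}$ can be sent without disturbing $\ket{0}$. The readout is the two-outcome measurement $\{\ketbra{2}{2},\,\mathbb{I}-\ketbra{2}{2}\}$, with outcome $c=1$ assigned to the first projector.

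Then I would verify the four input cases directly: for $(a,b)\in\{(0,0),(0,1),(1,0)\}$ the state presented to the measurement is $\ket{0},\ket{0},\ket{1}$ respectively, all orthogonal to $\ket{2}$, so $c=0=a\cdot b$; and for $(a,b)=(1,1)$ the state is $B_1A_1\ket{0}=B_1\ket{1}=\ket{2}$, so $c=1=a\cdot b$. Hence the strategy wins with probability $1$. Finally I would note that the construction lives in the reversible setting for \emph{both} system types: each $A_a,B_b$ is a permutation matrix, which is simultaneously a reversible classical gate and a unitary; the two-outcome projective measurement is legitimate quantumly, and classically it is simply the deterministic readout ``is the register in state $2$?''. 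For $d>3$ the same gates, extended by the identity on the remaining labels, work verbatim.

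I do not expect a genuine obstacle here --- the content is essentially the observation that an AND can be computed reversibly once one ancillary value is available. The only points requiring care are bookkeeping: insisting that $A_1,B_1$ be full permutations of the $d$ labels rather than partial maps (so that reversibility is literally respected), and confirming that the $d=2$ case really does fail, which is the affine argument of Proposition~\ref{rev} and is what makes ``$d\ge 3$'' the correct threshold.
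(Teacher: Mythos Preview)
Your proof is correct and follows essentially the same approach as the paper: an explicit permutation-gate strategy on a three-level system starting in $\ket{0}$, with readout ``is the register in state $2$?'', then embedding into $d>3$. The only cosmetic difference is the particular permutations chosen---the paper takes $A_1=B_1=X$ (the qutrit cyclic shift $X\ket{i}=\ket{i+1 \bmod 3}$) rather than your asymmetric transpositions, but the mechanism and verification are identical.
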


\begin{proof}
We provide a qutrit strategy, and note that this can always be embedded into systems of dimension greater than $3$.
Without loss of generality we suppose that the system is prepared in the state $\left |0 \right\rangle$, and the strategy consists of the gates $A_0=\mathbb{I},A_1=X,B_0=\mathbb{I},B_1=X$. The generalised Pauli $X$ acts as $X\left |i \right\rangle=\left |i +1\right\rangle,$ where $i\in\{0,1,2\}$ and the sum is $\bmod{3}$. The measurement is given by the PVM $\{\left|0\right\rangle\left\langle 0\right| + \left|1\right\rangle\left\langle 1\right| , \left|2\right\rangle\left\langle 2\right| \}$. If we associate the outcome $0$ to the first element of the measurement and the outcome $1$ to the second, we obtain $a\cdot b\bmod{2}$ with probability $1.$ Notice that this strategy can equally be applied in the case of a classical trit, using the obvious analogous state and reversible gates. 
\end{proof} 

This shows that, if the operations on the system are restricted to reversible gates, the CHSH* game acts similar to a \emph{dimensional witness}, as it can witness when the dimension of the system is at least $3$. This is closely related to the seminal work of \cite{Gallego2010}, especially the first case study concerning witnesses for classical bits and qubits.

\textit{Connection to Landauer's principle.} We have seen that erasure is a powerful tool that allows to win the CHSH* game with certainty. Reversible classical and quantum settings lead to distinct lower values for the game, which can be used to identify the nature and dimension of the system; this gives us a new perspective on the non-classical nature of quantum information storage and measurement. 

It was first argued by Landauer \cite{Landauer1961} that irreversible operations are not fundamental. Landauer's principle states that every irreversible classical operation on logical bits must be accompanied by a rise in the entropy of the non-information-bearing degrees of the system or its environment. This holds because in order to build an irreversible gate out of fundamentally reversible operations, we need to discard or erase information. Following Landauer's approach, we associate the erasure of a single bit with an increase in entropy of $kT \log_2 2$, where $k$ is the Boltzmann constant and $T$ the temperature of the system and environment.

The optimal strategy in the irreversible setting, which wins the game with certainty, requires erasure in only one of the four input combinations of $a$ and $b$ (i.e. when $a=1$ and $b=0$). The average increase in entropy is therefore $\frac{1}{4}kT \log_2 2$. Now, let us consider the same strategy, but now implementing a \emph{partial} erasure for the same input combination; when $a=1$ and $b=0$, the system is erased with probability $\sqrt{2}-1\approx 0.41$ and therefore the increase in entropy is on average $\frac{1}{4}0.41kT \log_2 2$. This leads to a success probability equal to the one obtained using a quantum system (i.e. $\cos^2(\frac{\pi}{8})$) for the CHSH$^*$ game, but this time using a classical system and partial erasure.

We can interpret the success probability as how much the chosen setting allows us to learn about the irreversible function $a\cdot b.$ Since we can always transform an irreversible setting to a reversible one by increasing the amount of memory, what the quantum bound exhibits is the qubit's ability to simulate two classical bits (one of which is going to be erased). This is made even more explicit in Figure \ref{QubitStrategies}, which compares the state spaces of a pair of bits, a single qubit and a single bit. In particular, in the optimal quantum strategy the single-qubit state space encodes the four possible input combinations as four quantum states. The measurement then extracts one bit of information.
Since the four states are not all {pairwise} orthogonal, the system is not storing two independent bits prior to the measurement and can therefore perform better than the reversible classical and Clifford settings.

\textit{Generalisation to higher dimensions.}
We have introduced the CHSH* game as a modification of the CHSH game from two players to one player. It is natural to  consider a similar one-player modification of the $\bmod{q}$ CHSH$_q$ game. We call such a game the CHSH$_q^*$ game. An interesting question is whether Lemma~\ref{MainTheorem} can be extended to a correspondence between strategies for the single qudit and CHSH$_q$ games.
The current proof of the lemma does not directly generalise to systems of higher dimension since it utilises some special properties of 2x2 unitary matrices.
 
Nevertheless, we conjecture that the correspondence between the Tsirelson bound for the CHSH game and the quantum value for the CHSH$_q^*$ game in the unitary setting holds for arbitrary dimensions. We here provide a support towards the validity of the conjecture, by focusing on the case of $q=3$. Specifically, we show that a strategy in the CHSH$_3^*$ game mapped from a slight modification of an optimal quantum strategy in the CHSH$_3$ as provided by Ji et al. in \cite{Ji2008}, obtains exactly the value of Tsirelson's bound for the CHSH$_3$ game, which is known to be approximately 0.71 \cite{Buhrman2005, Ji2008, Liang2009, Bavarian2015}. We also show that the Bell bound of $\nicefrac{2}{3}$ for the CHSH$_3$ game holds equally for the CHSH$_3^*$ game.

Since we are in $\bmod 3$ arithmetics, the CHSH$_3^*$ game is won if the player's final measurement outputs $c=a\cdot b\Mod{3}$, for inputs $a,b,c \in\{0,1,2\}$. For a classical trit with reversible gates, the maximum probability of success (coinciding with the known Bell bound \cite{Buhrman2005, Ji2008, Liang2009, Bavarian2015}) is $\nicefrac{2}{3}$. This can be found by listing all the possibilities for the different input values. One way to obtain it is to start with the trit in the state $0$ and apply the gates $A_0=A_1=B_0=B_2=\mathbb{I}, A_2=B_1=X.$ 

Suppose now that we have a qutrit system prepared in state \[T_3\ket{+}=T_3\frac{\ket{0}+\ket{1}+\ket{2}}{\sqrt{3}},\] 
where the gate $T_3=\text{diag}(1,w^{-\nicefrac{1}{3}},w^{-\nicefrac{2}{3}})$ is the dimension-3 equivalent of the non-Clifford gate $T$, and $w=\exp(\frac{2\pi i}{3}).$ 

Let us choose the following control gates: 
\[A_0=B_0=\mathbb{I},A_1=B_2=V, A_2=B_1=W,\] 
where $V=\text{diag}(1,w,w)$ and $W=\text{diag}(1,1,w).$ Measuring the system in the $X$ basis gives a success probability $P_{\textrm{suc}}\approx 0.71$. This strategy is inspired by the one used to obtain the Tsirelson bound for the CHSH$_3$ game in \cite{Ji2008}, thereby providing support for the conjecture that there exists a mapping from the single system protocol to CHSH in higher dimensions.

\textit{Conclusion.}
In this work we introduced the CHSH* game, a single player game inspired by the CHSH game. We showed that the optimal success probability for CHSH*, called the  value of the game, depends on many properties of the system available to the players. Defining these properties via settings, we showed that the value of the game depends on the irreversibility, or otherwise, of the transformations available to the players, the quantum or classical nature of the system and the system dimension.

Furthermore, we saw that the values obtained are equal to the Bell and Tsirelson bounds in the CHSH game (and the perfect strategies embodied by PR boxes). In particular, for the unitary quantum setting, Lemma \ref{MainTheorem} shows that any unitary strategy in CHSH* can be mapped to a quantum strategy in the CHSH game. This correspondence gives a new perspective on Tsirelson's bound, which arises due to the absence of irreversible transformations and the limited ability of quantum strategies with unitary gates and projective measurements to simulate erasure. In the more restricted Clifford setting, the value of the game does not exceed the reversible classical setting, hinting a connection with the Gottesman-Knill theorem that states that circuits consisting of gates from the Clifford group can be efficiently simulated by a probabilistic classical computer. This reflects the crucial role of non-Clifford computation to obtain better-than-classical performance in quantum computation. 

We show that, under the assumption of reversible transformations, the CHSH* game resembles a dimensional witness, since any initial state residing in a state space of dimension $d>2$ can in principle win the game with certainty. However, the restriction to reversible operations is not a limitation. In accordance with Landauer's principle, implementing irreversible transformations at the microscopic level requires ancillary bits which must then be erased. The presence of exactly these hidden ancilliary bits is detected by our protocol.

We noted a similarity between the optimal unitary strategy for the CHSH* game and quantum Random Access Codes (RAC). The latter have also been proposed as dimensional witnesses \cite{Wehner2006}. It is therefore important to emphasise the differences between RAC and the CHSH* game. The CHSH* game is able to detect the hidden information needed to implement irreversible gates. However, irreversible gates provide no advantage for the implementation of Random Access Codes. This means that a dimensional witness based on the RAC protocol will be blind to this kind of hidden information. Following Landauer's approach, we assert that the ability to detect irreversible dynamics should be an important desideratum for quantum dimensional witnesses. 

We conjecture our results to hold also for the generalisation of the protocol to $\bmod{q}$ arithmetics. We support this by examining the $q=3$ case in the single system scenario, for which we show the validity of the Bell bound and we further provide a strategy to achieve Tsirelson's bound. The validity of this conjecture may open the way to easier approaches for deriving Tsirelson's bounds in $\bmod{q}$ arithmetics, by using our single-system protocol as a tool for proving tightness.

This work also demonstrates how Tsirelson's bound arises from the restricted physics of a unitarily evolving single qubit system. In light of Landauer's principle, we considered the entropic costs of the erasure associated with the CHSH* game and how the lack of such an operation in unitary quantum mechanics is a barrier to winning the game deterministically. Via the correspondence with Tsirelson's bound proven in Lemma~\ref{MainLemma}, we demonstrate a link between the reversibility in fundamental operations embodied by Landauer's principle, and the non-unity value of Tsirelson's bound. In this way, we complement previous studies that have established connections of thermodynamics with uncertainty relations \cite{hanggi2013} as well as with the different interpretations of quantum theory \cite{Cabello2016}.

Finally, a recent paper \cite{MansfieldKashefi2018} has introduced a new notion of transformation contextuality, where the contexts are sequences of transformations in a $l2$-TBQC protocol. This work is relevant to the CHSH* game, since  \cite[Theorem~1]{MansfieldKashefi2018} applies to the CHSH* game too. Other forms of contextuality have been studied from the single-particle perspective \cite{Spekkens2009}, but they do not apply here. Our work shows that assumptions of reversibility in transformations can have a dramatic effect on the capabilities of the system, motivating further study of the relationship between non-classicality and irreversible dynamics.

\textit{Acknowledgements.} We would like to thank Carlo Sparaciari for useful discussions on Landauer's principle. This work was supported by the Brazilian National Council for Research and Development (CNPq) [206604/2014-9], the EPSRC Centre for Doctoral Training in Delivering Quantum Technologies [EP/L015242/1], A.P.~and S.M.~acknowledge support from European Union's Horizon 2020 Research and Innovation program under Marie Sklodowska-Curie Grant Agreements No.~705194 and No.~750523, respectively.
\\

\noindent{}Luciana Henaut and Lorenzo Catani contributed equally to this work.

\end{document}